\documentclass[runningheads]{llncs}
\usepackage[T1]{fontenc}
\usepackage{amsmath,amssymb}
\usepackage{booktabs,array}
\usepackage{graphicx}
\usepackage{tikz}
\usetikzlibrary{arrows.meta,positioning,calc}
\usepackage[hidelinks]{hyperref}
\usepackage{microtype}

\newcommand{\I}{\mathcal I}
\newcommand{\R}{\mathbb R}
\newcommand{\Z}{\mathbb Z}

\newcommand{\up}{\operatorname{up}_2}
\definecolor{spine}{RGB}{158,42,43}
\definecolor{component}{RGB}{46,88,128}

\begin{document}
\title{Almost Linear Universal Point Sets\protect\\ for Planar Graphs}
\titlerunning{Almost Linear Universal Point Sets for Planar Graphs}
\author{Taylor Gordon}
\authorrunning{T. Gordon}
\institute{}

\maketitle

\begin{abstract}
A point set is universal for planar graphs on $n$ vertices if every such graph
has a straight-line drawing without crossings whose vertices belong to the set.
We construct universal point sets of size $n^{1+o(1)}$, improving
the previous quadratic upper bound. Our construction uses the reduction of
Bannister, Cheng, Devanny, and Eppstein from universal point sets to
superpatterns for $213$-avoiding permutations. We represent these permutations
by ordered rooted forests and construct a small family of intervals containing
every such forest. The result follows from a straightforward bound on the size
of the family of intervals.
GPT-6~Astra assisted in developing the construction and proof.
\end{abstract}

\section{Introduction}
\label{sec:introduction}
A \emph{universal point set} for planar graphs on $n$ vertices is a finite set
$P\subset\R^2$ such that every planar graph on $n$ vertices has a straight-line
drawing without crossings in which its vertices map to distinct points of $P$.
The mapping may depend on the graph. Let $u(n)$ be the smallest possible size
of such a set.

How small can a universal point set be? The question is attributed to Mohar
and appears as Problem~45 in The Open Problems Project~\cite{topp}.
In particular, it asks whether $u(n)=O(n)$. F\'ary's theorem~\cite{fary} gives a
straight-line drawing of each individual planar graph, but does not give a
small point set that works for all of them. The grid-drawing results of
de~Fraysseix, Pach, and Pollack~\cite{dfpp} and Schnyder~\cite{schnyder},
published in 1990, established $u(n)=O(n^2)$. Their constructions place every planar graph on a
grid with a linear number of rows and columns.

Subsequent work reduced the leading constant in the quadratic bound.
A triangular subset of the de~Fraysseix--Pach--Pollack grid gives
$n^2-O(n)$ points, and Schnyder's construction gives
$n^2/2-O(n)$ points. In 2008, Brandenburg~\cite{brandenburg} obtained
$4n^2/9+O(n)$ points. Bannister, Cheng, Devanny, and
Eppstein~\cite{bcde} obtained $n^2/4-\Theta(n)$ points in 2014 by connecting
the problem to permutation patterns. The quadratic order remained the best
general upper bound in the 2025 survey of Dujmovi\'c and Morin~\cite{dm}.

Smaller sets are known for several restricted classes. An arbitrary set of
$n$ points in general position suffices for outerplanar graphs~\cite{gritzmann}.
Bannister et~al.\ obtained $O(n\log n)$ points for simply nested planar graphs
and $n\log^{O(1)}n$ points for each fixed bound on pathwidth~\cite{bcde}.
Angelini et~al.~\cite{angelini} obtained $O(n\log n)$ points for
$2$-outerplanar graphs. For planar $3$-trees, Fulek and T\'oth~\cite{fulek}
gave a bound of $O(n^{3/2}\log n)$.
Felsner et~al.~\cite{felsner} proved that
$2n-2$ points suffice for bipartite planar graphs and for planar graphs of
maximum degree three. Table~\ref{tab:bounds} collects representative bounds;
all of them concern drawings with straight edges.

\begin{table}[t]
\caption{Selected upper bounds on the size of universal point sets for
$n$-vertex graphs. Constants in the bounded-pathwidth row may depend on the
fixed pathwidth bound. The final row is the result of this paper.}
\label{tab:bounds}
\centering
\small
\setlength{\tabcolsep}{3.4pt}
\renewcommand{\arraystretch}{1.17}
\begin{tabular}{@{}>{\raggedright\arraybackslash}p{.315\textwidth}>{\raggedright\arraybackslash}p{.315\textwidth}>{\raggedright\arraybackslash}p{.315\textwidth}@{}}
\toprule
Graph class & Number of points & Reference \\
\midrule
All planar graphs & $n^2-O(n)$ & de Fraysseix et al.~\cite{dfpp} \\
All planar graphs & $n^2/2-O(n)$ & Schnyder~\cite{schnyder} \\
All planar graphs & $4n^2/9+O(n)$ & Brandenburg~\cite{brandenburg} \\
All planar graphs & $n^2/4-\Theta(n)$ & Bannister et al.~\cite{bcde} \\
Outerplanar graphs & $n$ & Gritzmann et al.~\cite{gritzmann} \\
Simply nested graphs & $O(n\log n)$ & Bannister et al.~\cite{bcde} \\
Planar graphs of bounded pathwidth & $n\log^{O(1)}n$ & Bannister et al.~\cite{bcde} \\
$2$-outerplanar graphs & $O(n\log n)$ & Angelini et al.~\cite{angelini} \\
Planar $3$-trees & $O(n^{3/2}\log n)$ & Fulek and T\'oth~\cite{fulek} \\
Bipartite planar graphs;
planar graphs of maximum degree $3$ & $2n-2$ & Felsner et al.~\cite{felsner} \\
\midrule
\textbf{All planar graphs} & $\boldsymbol{n^{1+o(1)}}$ & \textbf{This paper} \\
\bottomrule
\end{tabular}
\end{table}

The lower bounds are linear. Chrobak and Karloff~\cite{chrobak} obtained a
lower bound exceeding $n$ by a constant factor. Kurowski~\cite{kurowski}
proved that at least $1.235n$ points are necessary for sufficiently large
$n$, and Scheucher, Schrezenmaier, and
Steiner~\cite{scheucher} proved the bound $(1.293-o(1))n$.
Thus a substantial gap remained between the general lower and upper bounds.
Our main result reduces this gap to a subpolynomial factor.

\begin{theorem}
\label{thm:main}
For every positive integer $n$, there is an explicitly constructible universal
point set for planar graphs on $n$ vertices of size
\[
 n\exp\bigl(O(\sqrt{\log n})\bigr)=n^{1+o(1)}.
\]
\end{theorem}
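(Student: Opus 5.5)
The plan is to go through the reduction of Bannister, Cheng, Devanny, and Eppstein~\cite{bcde}. They show that if $S$ is a superpattern for the $213$-avoiding permutations of length $n$, meaning every such permutation occurs as a pattern in $S$, then there is an explicitly constructible universal point set for $n$-vertex planar graphs of size $O(|S|)$. They obtain it by placing the points of $S$ on a suitable convex curve. Theorem~\ref{thm:main} therefore reduces to constructing a superpattern for $213$-avoiding permutations of length $n$ of size $n\exp(O(\sqrt{\log n}))$.

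For that I would use the standard bijection between $213$-avoiding permutations and ordered rooted forests. The permutation is read recursively: its maximum (or first entry) is the root, and the entries before and after it form the subtrees. Pattern containment of permutations then corresponds to an embedding of forests that preserves ancestry and left-to-right order. It is enough to build one forest $F_n$ that contains every ordered forest on $n$ nodes and to show $|F_n|\le n\exp(O(\sqrt{\log n}))$. Equivalently, via the depth-first order, I need a small family of intervals $\I$ such that every forest on $n$ nodes has its nodes' subtree intervals matched inside $\I$.

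The construction would be recursive with a branching parameter $k\approx\exp(\sqrt{\log n})$. Any forest on $m$ nodes has a heavy-path or centroid-type decomposition. Choose a spine, consisting of a path from a root, such that every subtree hanging off it has size at most $m/k$, except possibly $O(k)$ "medium" pieces. Hanging subtrees are grouped by size class $m/k^{j+1}<s\le m/k^j$, so there are only $O(\log_k m)=O(\sqrt{\log n})$ levels. The universal forest $F_m$ consists of a spine of length $O(m)$. At each spine position it attaches universal forests for the smaller size classes, with multiplicities chosen so that the total number of nodes at each scale stays $O(m)$, up to a factor $C$ per level. This gives a recurrence of the form $|F_m|\le C\,m + k\,|F_{m/k}|\cdot(\text{overhead})$. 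Unrolling $\log_k n$ levels yields $n\cdot C^{\log_k n}\cdot k^{O(1)}=n\exp(O(\log n/\log k+\log k))=n\exp(O(\sqrt{\log n}))$ once $k=\exp(\sqrt{\log n})$.

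The main obstacle is the embedding step: ensuring the hanging subtrees of a given forest can be assigned to slots of the universal spine in order, without needing a full copy of $F_{m/k}$ at every spine node. I would handle this with a "ruler" pattern. Slots of size class $j$ are placed at spine positions that are multiples of appropriate spacings, so that any sequence of hanging subtrees of total size $m$ can be greedily matched left to right with only constant-factor slack per class. This is where the $C$ per level arises, and checking that the greedy assignment never runs out of slots is the key counting lemma. Explicit constructibility then follows since each step is a deterministic recursion.
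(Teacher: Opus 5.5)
The gap is in the step you yourself call the main obstacle: the ``key counting lemma'' is not proved, and for the host you describe it is false. First, a single forest $F_n$ is not equivalent to a family of intervals. It is a \emph{laminar} family, that is, a $213$-avoiding superpattern, whereas Lemma~\ref{lem:forest} accepts arbitrary overlapping intervals. In a spine-with-slots forest, a slot attached at a host spine node can serve the hanging subtrees of only one pattern spine node. Indeed, if $v_i\mapsto\rho_a$ and $v_{i+1}\mapsto\rho_b$ with $a<b$, the pieces at $v_i$ must avoid the subtree of $\rho_b$, while those at $v_{i+1}$ must lie inside it.

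Your slots $F_\sigma$ have spines of length $O(\sigma)$, so their depth is at most $C'\sigma$ for some constant $C'$. Take a spine of $\lfloor k^2/(3C')\rfloor$ nodes, each carrying one hanging chain of length $\lceil 2C'm/k^2\rceil$; this tree has fewer than $m$ nodes. No chain fits in a class-$2$ slot, and distinct chains need distinct slots. So $\Omega(k^2)$ slots of class at most $1$ are needed, which is $\Omega(k^2)\,|F_{m/k}|$ nodes. Your recurrence provides only $O(k)\,|F_{m/k}|$. Letting the pattern's spine descend into a slot only moves the same deficit one level down. The loss per level is therefore of order $k$, not a constant $C$. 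Unrolling the corrected recurrence gives $|F_m|\ge m\,(ck)^{\log_k m-O(1)}=m^{2-o(1)}$ for some constant $c>0$. Finer size classes avoid this factor, but then there are $\Theta(\log m)$ levels, and a constant loss per level compounds to a polynomial factor. Nothing in your sketch shows that a forest host of size $n^{1+o(1)}$ exists, and the paper does not use one.

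The paper replaces private slots by overlap. Its family $\I_n$ is not laminar. Layer $i$ consists of all intervals of radius $R_i=\lfloor 2i\,2^{\sqrt{2\log_2 i+4}}\rfloor$ centred at multiples of $c_i=\up(i)<2i$. These intervals overlap heavily, so one grid of $O(W/i)$ intervals serves every vertex of subtree size $i$, wherever it occurs in the forest.

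A child whose size $m'$ exceeds $t-\delta$, with $\delta=t/A(t)$, is placed concentrically with its parent. The bound $F'\ge 2A+2$ on $[t-\delta,t]$ then makes the gap $R_t-R_{m'}$ large enough for the other children. Without such a child, the estimate $\sum_jF(a_j)\le F(t)-t-2\delta$ of Lemma~\ref{lem:slack} lets ordered packing fit all children inside the parent. The superlinear factor enters only through the radii, via $A(n)=2^{\sqrt{2\log_2 n+4}}$. The count is then simply $\sum_{i\le n}(1+2W/c_i)\le n+W(2+\lceil\log_2 n\rceil)$.

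A minor point: the reduction of Bannister et al.\ does not use a convex curve. It uses a superpattern for $S_{n-3}(213)$, three added entries, and the stretching $(i,Q^{\tau_i})$.
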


This establishes $u(n)=n^{1+o(1)}$. It does not settle whether $u(n)=O(n)$.
The points need not form a grid, and the bound concerns their number, rather
than the area of the drawing or the magnitude of the coordinates.

Our proof begins with the reduction of Bannister et~al.~\cite{bcde}.
A permutation containing every $213$-avoiding permutation of length $n-3$
can be converted into a universal point set by adding three points and
stretching one coordinate. We therefore construct a small permutation with
this containment property.
The $213$-avoiding permutations correspond to ordered rooted forests, which
can be represented by nested intervals. Our interval family has $n$ layers.
Layer $i$ has one prescribed radius, and its midpoints are spaced by the
smallest power of two at least $i$. A vertex whose subtree has $i$ vertices
uses an interval from layer $i$. A child containing almost all of its parent's
subtree stays concentric with the parent; the other children fit beside it.
When no child is this large, all children fit by ordinary ordered packing.
The size bound then follows by summing the number of midpoints over the
dyadic layers.

Section~\ref{sec:forests} describes the two reductions.
Section~\ref{sec:packing} defines the layers and gives the packing estimates.
Section~\ref{sec:embedding} represents trees and forests using these layers.
Section~\ref{sec:count} counts the interval family and
completes the proof. For completeness, Appendix~\ref{app:reduction} gives
the point-set construction and the planarity argument from~\cite{bcde}.

A \href{https://github.com/taylorgordon20/math/tree/main/213-superpatterns}{Lean~4 formalization}
of an earlier interval construction proves the same
$n^{1+o(1)}$ superpattern bound.
The layered proof presented here is not part of that formalization.

This paper was prepared with substantial assistance from GPT-6 Astra.
Starting from the author's interval-based approach, the model helped develop
the construction, formulate and revise the proof, check finite examples, and
draft the exposition.

\section{Superpatterns and Ordered Forests}
\label{sec:forests}
A permutation $\pi$ is a \emph{pattern} of a permutation $\sigma$ if a
subsequence of $\sigma$ has the same relative order as $\pi$.
A permutation \emph{avoids $213$} if it has no indices $i<j<k$ with
$\pi_j<\pi_i<\pi_k$.
Write $S_n(213)$ for the set of such permutations of length $n$.
An \emph{$S_n(213)$-superpattern} contains every member of $S_n(213)$.
The superpattern itself is not required to avoid $213$.

The following lemma is the connection to universal point sets.

\begin{lemma}[Bannister, Cheng, Devanny, and Eppstein~\cite{bcde}]
\label{lem:bcde}
Let $n\ge3$. An $S_{n-3}(213)$-superpattern of length $M$ gives a
universal point set of size $M+3$ for planar graphs on $n$ vertices.
\end{lemma}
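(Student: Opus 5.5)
We follow the Bannister--Cheng--Devanny--Eppstein argument, which goes through canonical orderings and the de~Fraysseix--Pach--Pollack shift method. It suffices to treat maximal planar graphs, since every planar graph on $n$ vertices is a spanning subgraph of a triangulation. A straight-line crossing-free drawing of the triangulation restricts to one of the subgraph.

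Fix a triangulation $G$ with outer face $v_1v_2v_n$ and a canonical ordering $v_1,\dots,v_n$. Adding $v_k$ covers a contiguous interval of the current outer path, and the edges from $v_k$ to the interior of that interval form a forest structure. The standard construction assigns to each vertex $v_k$, $k\ge 3$ with $k<n$, a parent: the leftmost or rightmost neighbour on the outer path, as in the Schnyder-wood/DFPP shift. This yields an ordered rooted forest $F$ on the $n-3$ middle vertices. Its preorder positions give the $x$-order of the DFPP drawing, and its postorder or insertion order gives the $y$-order.

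\textbf{Step 1.} Verify that the permutation relating these two orders avoids $213$. Nested subtrees occupy contiguous intervals in one order, and sibling subtrees appear in a consistent relative order in the other. This is exactly the standard bijection between ordered forests and $213$-avoiders, up to choosing the correct symmetry.

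\textbf{Step 2.} Show that the drawing depends only on the relative orders. We need that for any points $p_1,\dots,p_{n-3}$ whose $x$-order and $y$-order realize this permutation, the graph can be drawn on them after a suitable vertical stretching. Here we use the given superpattern $\sigma$ of length $M$. Place point $i$ at $(i,\, c^{\sigma_i})$ for a large base $c$, which is the "stretching one coordinate" step. Add three extra points: two far to the lower left and lower right for $v_1,v_2$, and one very high for $v_n$. This gives $M+3$ points in total.

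\textbf{Step 3.} For a given $G$, choose an occurrence of its permutation $\pi$ in $\sigma$, and map the vertices accordingly.

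\textbf{Main obstacle: planarity (Step 2).} With exponentially stretched $y$-coordinates, each vertex lies far above every vertex added earlier. Consequently, the slope of any edge from $v_k$ down to an earlier vertex is dominated by the height difference. Edges from $v_k$ to the covered outer path then look nearly vertical, yet they remain correctly ordered by $x$.

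The argument is by induction on $k$, maintaining the following invariant: the current outer path is $x$-monotone, and every point above it within the relevant range is visible from it. When $v_k$ is placed higher than everything so far, at an $x$-coordinate strictly between the endpoints of the interval it covers, all of its edges down to that interval are non-crossing. This is where the $x$-betweenness supplied by the preorder/forest structure is used. The covered vertices then disappear from the outer face.

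The three extra points must be placed far enough out that $v_1,v_2$ see every point from below and $v_n$ sees the entire final outer path. Making the parameter $c$ and these positions explicit with respect to $M$ is routine once the invariant is in place.
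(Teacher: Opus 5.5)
Your proof has a genuine gap: the height order is never chosen so that Steps~1 and~2 hold at the same time. Your planarity induction adds $v_k$ above everything drawn so far, so it needs the height order to be the canonical order. Step~1 needs the height order, read along the preorder, to avoid $213$, and the canonical (insertion) order does not do this. Take $v_3$ on $ab$, then $v_4$ on the block $a v_3$, then $v_5$ on the block $v_3 b$, and finally $z=v_6$ on the whole path $a v_4 v_3 v_5 b$; this is a triangulation with $12$ edges. You place $v_1=a$ on the left and $v_2=b$ on the right. An $x$-monotone outer path then forces $v_4,v_3,v_5$ to appear in this order horizontally, while their heights increase as $v_3,v_4,v_5$. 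So the three middle vertices form the pattern $213$ itself, which an $S_3(213)$-superpattern need not contain; for instance, the superpattern may itself avoid $213$. Plain postorder fails for a different reason: it puts a parent above its children, whereas your induction puts each vertex above its earlier neighbours.

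The order that works is the reverse postorder $r$, i.e.\ preorder with every child list reversed. The tree must also be fixed exactly: $v_t$ becomes the \emph{first} child of its first earlier neighbour on $P_{t-1}$. Your ``leftmost or rightmost'' parent with an unspecified child order does not guarantee the monotone-boundary property. With this tree, $u$ is an ancestor of $v$ iff $p(u)<p(v)$ and $r(u)<r(v)$, so $\pi_{p(v)}=r(v)$ avoids $213$.

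Your induction then needs a lemma the proposal never states: $r$ is itself a canonical ordering whose outer paths are still increasing in $p$ (Lemma~15 of Bannister et al.). The paper proves it in three steps:
\begin{itemize}
\item It orients each edge from its earlier to its later endpoint in the original canonical order.
\item It checks that $r$ is a topological order of this orientation. Tree edges go from ancestor to descendant. For a nontree edge $uv$ with $v$ later, $p(v)<p(u)$ and $u,v$ are incomparable, hence $r(u)<r(v)$.
\item It shows that consecutive incomparable vertices of a canonical order can be swapped without changing the tree, so such swaps reach $r$.
\end{itemize}

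Once this lemma is in place, your incremental visibility argument is a valid alternative to the paper's global argument, which combines the crossing criterion~\eqref{eq:crossing} with the obstruction property~\eqref{eq:obstruction}. In your version the boundary is $p$-monotone and each new vertex satisfies $p(L)<p(v)<p(R)$. The fact that the highest of any three stretched points lies strictly above the line through the other two then keeps the new edges above the current path.
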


The construction in Lemma~\ref{lem:bcde} adds three entries to the
permutation and maps an entry of rank $j$ in position $i$ to
$(i,(M+3)^j)$. The choice of which points represent the graph comes from
two traversals of a canonical spanning tree. The exponential stretching
ensures that these choices give a drawing without crossings. We use the
lemma as a reduction; its details are given in Appendix~\ref{app:reduction}.

An \emph{ordered rooted forest} is a rooted forest with an order on its roots
and on the children of each vertex. An \emph{interval representation} of
such a forest assigns an interval to each vertex so that a descendant's
interval lies in the interior of its ancestor's interval, and intervals of
incomparable vertices are disjoint and occur in the prescribed order.
We require a positive gap between disjoint intervals.

\begin{lemma}
\label{lem:forest}
Suppose a family of $M$ intervals contains an interval representation of every
ordered rooted forest on $n$ vertices. Then it gives an
$S_n(213)$-superpattern of length $M$.
\end{lemma}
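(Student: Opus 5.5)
The plan is to read a permutation directly off the interval family: position is given by left endpoints and value by right endpoints in decreasing order. I then check that the subpattern cut out by a forest representation is exactly the $213$-avoiding permutation encoding that forest.

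\emph{Construction of the superpattern.} Let $\I$ be the family of $M$ intervals. List the intervals of $\I$ in increasing order of left endpoint; this is the position order. Rank them by decreasing right endpoint; this is the value order. Ties in either order are broken arbitrarily but fixed once and for all. This gives a permutation $\sigma$ of length $M$.

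\emph{The encoding of forests.} For an ordered rooted forest $F$ on $n$ vertices, list its vertices in preorder: roots in their order, each vertex before its descendants, children in their order. Define $\pi_F$ to have the same relative order as the values assigned to the vertices, where $i$ precedes $j$ in preorder and
\begin{itemize}
\item[(a)] $\pi_F(i)<\pi_F(j)$ if $j$ is a descendant of $i$;
\item[(b)] $\pi_F(i)>\pi_F(j)$ otherwise, in which case $j$ lies entirely after the subtree of $i$.
\end{itemize}
Concretely, $\pi_F$ maps the preorder rank of a vertex to its rank in reverse postorder; this is consistent with (a) and (b).

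To see that $\pi_F$ avoids $213$, suppose $i<j<k$ with $\pi_j<\pi_i<\pi_k$. By (b), $j$ lies after the whole subtree of $i$. By (a), $k$ is a descendant of $i$. But then $k$ precedes $j$ in preorder, a contradiction.

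To see that $F\mapsto\pi_F$ is injective, note that $\pi_F$ determines the ancestor relation on preorder positions: $i<j$ and $\pi_i<\pi_j$ hold exactly when $i$ is an ancestor of $j$. The parent of $j$ is then its nearest ancestor. The sibling and root orders are inherited from preorder, so $F$ is recovered from $\pi_F$.

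Both sets have Catalan cardinality $C_n$, so the map is a bijection onto $S_n(213)$. This counting step is the one I would double-check most carefully; the remaining steps are mechanical.

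\emph{Containment.} Fix $\tau\in S_n(213)$ and let $F$ be the forest with $\pi_F=\tau$. By hypothesis, $\I$ contains an interval representation of $F$. Among the chosen intervals, all left endpoints are distinct and all right endpoints are distinct. This holds because descendants lie in the interior of their ancestors, and disjoint intervals are separated by a positive gap. Hence the arbitrary tie-breaking in $\sigma$ never affects these intervals.

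Within the chosen intervals, left-endpoint order is preorder. An ancestor starts strictly before its descendants. Of two incomparable vertices, the earlier one, together with its whole subtree, lies strictly to the left.

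Right endpoints compare as follows. An ancestor ends strictly after a descendant, so in decreasing right-endpoint order the ancestor gets the smaller value; this matches (a). An earlier incomparable vertex ends strictly before a later one, so it gets the larger value; this matches (b).

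Thus the subsequence of $\sigma$ at the chosen intervals is order-isomorphic to $\tau$. Since $\tau\in S_n(213)$ was arbitrary, $\sigma$ is an $S_n(213)$-superpattern of length $M$.
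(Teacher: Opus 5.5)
Your proof is correct, but it builds the forest--permutation correspondence from the opposite direction to the paper. The paper starts from an arbitrary $\pi\in S_n(213)$ and defines $i\prec j$ iff $i<j$ and $\pi_i<\pi_j$. It shows that the predecessors of each entry form a chain, since otherwise a $213$ occurs. The cover relations then form a forest, and a non-interleaving check shows that the index order is its preorder. This constructs the required forest directly, with no enumeration. You go from forests to permutations (preorder rank $\mapsto$ reverse-postorder rank) and verify $213$-avoidance and injectivity. You then obtain surjectivity from two classical counts: $|S_n(213)|=C_n$, and the number of plane forests on $n$ vertices (equivalently, plane trees on $n+1$ vertices) is $C_n$. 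Both facts are standard, so the counting step you flagged is sound, but it should be cited. The tradeoff is that your forward direction is cleaner and makes the $213$-avoidance transparent, while relying on two external enumeration results. The paper's argument is self-contained. Your injectivity argument, which recovers each parent as the nearest ancestor, is essentially the paper's inverse map restricted to the image. The paper's chain and non-interleaving checks are exactly what is needed to apply that map to every $213$-avoider and dispense with counting. Your containment step matches the paper's use of the points $(\ell_v,-r_v)$: distinct endpoints make tie-breaking irrelevant, left endpoints give preorder, and decreasing right endpoints give (a) and (b).
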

\begin{proof}
Let $\pi\in S_n(213)$. Order its entries by
\[
 i\prec j\quad\Longleftrightarrow\quad i<j\ \text{and}\ \pi_i<\pi_j.
\]
The predecessors of each entry form a chain. Otherwise two incomparable
predecessors, followed by that entry, would form a $213$ pattern.
Every nonminimal entry therefore has a unique immediate predecessor, so the
cover relations form a rooted forest.

Order its roots and children by their indices in $\pi$. The subtrees do not
interleave in this order. To see this, suppose $u<v$ are incomparable and a
descendant $w$ of $u$ occurs after $v$. Then
$\pi_v<\pi_u<\pi_w$, so $v\prec w$. This contradicts the fact that the
predecessors of $w$ form a chain. Thus the index order is the preorder of an
ordered rooted forest.

Take a representation of this forest in the given family. If the interval
of $v$ is $[\ell_v,r_v]$, then the points $(\ell_v,-r_v)$ have pattern
$\pi$: ancestor pairs increase in both coordinates, while incomparable
subtrees have opposite orders in the two coordinates.

Now order the entire interval family by increasing left endpoint, and give
each interval the rank of its right endpoint in decreasing order. Break
ties by fixed rules. These ranks form a permutation of length $M$.
All endpoints in a selected forest representation are distinct, so the
tie-breaking does not change its pattern. Hence this permutation contains
every member of $S_n(213)$.
\qed
\end{proof}

\begin{figure}[t]
\centering
\begin{tikzpicture}[x=0.56cm,y=0.62cm,every node/.style={font=\small}]
\begin{scope}[xshift=-2.8cm]
\draw (0,2)--(-1,1)--(-1,0);
\draw (0,2)--(1,1);
\foreach \x/\y/\z in {0/2/1,-1/1/2,-1/0/3,1/1/4,2.7/2/5}
  \node[circle,draw,fill=white,inner sep=2pt] at (\x,\y) {$\z$};
\node at (.8,-.75) {(a)};
\end{scope}
\begin{scope}[xshift=1.6cm,x=0.4cm]
\foreach \a/\b/\y/\z in {-4/4/2/1,-3/0/1.4/2,-2/-1/.8/3,1/3/1.4/4,5/7/2/5}{
 \draw[thick] (\a,\y)--(\b,\y);
 \draw (\a,\y-.08)--(\a,\y+.08);
 \draw (\b,\y-.08)--(\b,\y+.08);
 \node[above] at ({(\a+\b)/2},\y) {$\z$};
}
\node at (1.5,-.75) {(b)};
\end{scope}
\end{tikzpicture}
\caption{An ordered rooted forest (a) and an interval representation (b).
The vertices are numbered in preorder. Reading the ranks of decreasing right
endpoints in this order gives the permutation $24531$. The vertical positions
of the intervals are used only to separate them in the figure.}
\label{fig:forest}
\end{figure}

\section{Layered Intervals}
\label{sec:packing}
For $x\ge1$, put
\[
 s(x)=\sqrt{2\log_2x+4},\qquad A(x)=2^{s(x)},\qquad F(x)=2xA(x).
\]
For a positive integer $i$, define
\begin{equation}
 R_i=\lfloor F(i)\rfloor,
 \qquad c_i=\up(i)=2^{\lceil\log_2i\rceil}.
 \label{eq:layers}
\end{equation}
Thus $R_1=8$, $i\le c_i<2i$, and $c_i\mid c_j$ whenever $i\le j$.
An interval from \emph{layer $i$} has radius $R_i$ and midpoint in $c_i\Z$.
We will use layer $i$ at a vertex whose subtree has $i$ vertices.
Consequently, translating a tree representation by a multiple of its root's
spacing preserves the grids of every descendant.

For $n\ge1$, set $W=\up(R_n)$ and define the finite family
\begin{equation}
 \I_n=\bigl\{[jc_i-R_i,jc_i+R_i]:
 1\le i\le n,\quad 0\le j\le 2W/c_i,\quad j\in\Z\bigr\}.
 \label{eq:family}
\end{equation}
The upper limit is an integer: $c_i$ and $W$ are powers of two and
$c_i\le W$. We first construct representations near zero, then translate
them by $W$ to obtain intervals in~\eqref{eq:family}.

\begin{lemma}[Ordered packing]
\label{lem:packing}
Suppose $T_1,\ldots,T_q$ have interval representations with root midpoint
zero, each vertex using the layer indexed by its subtree size.
Let the tree sizes be $a_1,\ldots,a_q$, and put
\[
 B=\sum_{j=1}^q(2R_{a_j}+c_{a_j}).
\]
For any integer $z$, these representations can be translated into
$[z+1,z+B]$ in the prescribed order, with positive gaps, while preserving
all their layers and grids. They can also be packed into $[z-B,z-1]$.
\end{lemma}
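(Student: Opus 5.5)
We place the trees greedily from left to right, giving each tree $T_j$ a slot of length at most $2R_{a_j}+c_{a_j}$. Since translating a tree representation by a multiple of its root spacing $c_{a_j}$ preserves every descendant's grid (because $c_i\mid c_{a_j}$ for all $i\le a_j$, and descendants have smaller subtrees), we only need to choose, for each tree, a translation $t_j\in c_{a_j}\Z$.

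First, the whole representation of $T_j$ lies inside its root interval $[t_j-R_{a_j},t_j+R_{a_j}]$, because descendants lie in the interior of the root's interval. So it suffices to make the root intervals pairwise disjoint with positive gaps, in the prescribed order, and contained in $[z+1,z+B]$.

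We then define the translations inductively. Set $p_0=z$. Given $p_{j-1}$, the right end of the previous slot, let $t_j$ be the smallest multiple of $c_{a_j}$ with $t_j-R_{a_j}\ge p_{j-1}+1$, and put $p_j=t_j+R_{a_j}$. Because consecutive multiples of $c_{a_j}$ are $c_{a_j}$ apart and $p_{j-1}+1+R_{a_j}$ is an integer, we get
\[
t_j\le p_{j-1}+R_{a_j}+c_{a_j},
\]
so $p_j-p_{j-1}\le 2R_{a_j}+c_{a_j}$. Summing gives $p_q\le z+B$. The first root interval starts at or after $z+1$. Each later root interval starts at least one unit after the previous one ends, so the gaps are positive and the order is the prescribed one. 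All layers are unchanged, since only translations are applied.

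For $[z-B,z-1]$, we run the symmetric procedure from the right. Starting at $z$, we place $T_q,\dots,T_1$ leftward, each time choosing the largest multiple of $c_{a_j}$ with $t_j+R_{a_j}\le p-1$. The same bound shows the total span is at most $B$ and the left-to-right order is preserved.

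The only step needing care is the rounding estimate, namely that aligning to the grid costs at most $c_{a_j}$ extra, including the $+1$ gap. This holds because the smallest multiple of $c$ that is at least an integer $m$ is at most $m+c-1$. That slack absorbs the unit gap, so the bound is in fact $2R_{a_j}+c_{a_j}-1$ per tree.
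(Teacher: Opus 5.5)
Your proof is correct and is the same greedy construction as the paper's. Your $t_j$ is exactly the paper's $u_j=c_{a_j}\lceil (x_{j-1}+R_{a_j})/c_{a_j}\rceil$ with $x_j=p_j+1$, and your rounding estimate and reverse-order left packing match the paper's; your closing figure $2R_{a_j}+c_{a_j}-1$ is only the span from $p_{j-1}+1$ to $p_j$, while the per-tree increment including the unit gap is $2R_{a_j}+c_{a_j}$, as you derived, and that is the bound the sum needs.
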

\begin{proof}
Start with $x_0=z+1$ and choose the successive root midpoints
\[
 u_j=c_{a_j}\left\lceil\frac{x_{j-1}+R_{a_j}}{c_{a_j}}\right\rceil,
 \qquad x_j=u_j+R_{a_j}+1.
\]
The left endpoint is at least $x_{j-1}$. Integer rounding gives
$x_j\le x_{j-1}+2R_{a_j}+c_{a_j}$, so the last right endpoint is at
most $z+B$. Each translation preserves every descendant's grid.
For packing to the left, process the trees in reverse order, starting at
$x=z-1$, and use $u=c_a\lfloor(x-R_a)/c_a\rfloor$, followed by
$x=u-R_a-1$. The individual representations are translated, not reflected.
For an empty list take $B=0$.
\qed
\end{proof}

We record the estimates that explain the choice of radii. Extend $F$ to
$[0,1]$ by $F(x)=8x$. It is increasing and convex on $[0,\infty)$, with
$F(0)=0$. Indeed, for $x\ge1$,
\begin{equation}
 F'(x)=2A(x)\left(1+\frac1{s(x)}\right),\qquad
 F''(x)=\frac{2A(x)}{x s(x)^3}
 \left(s(x)^2+s(x)-\frac1{\ln2}\right)>0.
 \label{eq:derivatives}
\end{equation}
The right derivative at $1$ is $12$, exceeding the left derivative $8$.
In particular, $F'(x)\ge12$ for $x\ge1$, and convexity implies
$\sum_jF(a_j)\le F(\sum_j a_j)$ for nonnegative $a_j$.

\begin{lemma}[Space between layers]
\label{lem:slack}
Let $t\ge10$ be an integer, and write $s=s(t)$, $A=A(t)$, and
$\delta=t/A$. Then $1\le\delta\le t/4$ and
\begin{align}
 A(a)&\le A/2 &&(1\le a\le\delta), \label{eq:small}\\
 F'(x)&\ge2A+2 &&(t-\delta\le x\le t). \label{eq:large}
\end{align}
Moreover, if $a_j\ge0$, $\sum_j a_j\le t$, and $a_j\le t-\delta$,
then
\begin{equation}
 \sum_jF(a_j)\le F(t-\delta)+F(\delta)
 \le F(t)-t-2\delta.
 \label{eq:balanced}
\end{equation}
\end{lemma}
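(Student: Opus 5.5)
The plan is to verify each claim directly from the definitions $s(x)=\sqrt{2\log_2x+4}$ and $A(x)=2^{s(x)}$, writing $L=\log_2 t\ge\log_2 10$. The last claim, \eqref{eq:balanced}, then follows from \eqref{eq:small}, \eqref{eq:large} and the convexity of $F$.

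\emph{Step 1: range of $\delta$.} We have $\delta\ge1$ if and only if $A\le t$, i.e.\ $s\le L$, i.e.\ $2L+4\le L^2$. This holds because $L\ge\log_2 10>1+\sqrt5$. For $\delta\le t/4$ we need $A\ge4$, i.e.\ $s\ge2$, which is immediate since $s^2\ge4$.

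\emph{Step 2: the bound \eqref{eq:small}.} Since $A(\cdot)$ is increasing, it suffices to take $a=\delta$. The claim $A(a)\le A/2$ is equivalent to $s(a)\le s-1$, that is,
\[
 2\log_2 a+4\le s^2-2s+1=2L+5-2s,
\]
or $\log_2 a\le L-s+\tfrac12$. Since $\log_2\delta=L-s$, this holds with room to spare.

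\emph{Step 3: the bound \eqref{eq:large}, the main obstacle.} By the convexity recorded in \eqref{eq:derivatives}, $F'$ is increasing, so it suffices to check the single point $x=t-\delta=t(1-1/A)$. Here we must control how much $A(x)$ drops below $A$. Write
\[
 s-s(x)=\frac{2\log_2(t/x)}{s+s(x)},\qquad \log_2\frac tx=-\log_2\Bigl(1-\frac1A\Bigr)\le\frac{1}{(A-1)\ln2}.
\]
This gives $s-s(x)\lesssim 1/(sA\ln2)$. Then $A(x)=A\,2^{-(s-s(x))}\ge A\bigl(1-\ln2\,(s-s(x))\bigr)\ge A-\tfrac{c}{s}$ for a constant $c$ slightly above $1$. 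Plugging into $F'(x)=2A(x)\bigl(1+1/s(x)\bigr)$ gives
\[
 F'(x)\ \ge\ 2A-\frac{2c}{s}+\frac{2A(x)}{s}.
\]
This is at least $2A+2$ provided $A(x)\ge s+c$. At $t=10$ we have $s\approx3.26$ and $A\approx9.6$, and $A/s$ increases with $t$, so this should hold. The delicate part is keeping the explicit constants honest at the smallest case $t=10$. If the crude estimates are too lossy, I would first verify the inequality directly at $t=10$ and then use monotonicity in $t$.

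\emph{Step 4: the bound \eqref{eq:balanced}.} Since $F$ is increasing, we may enlarge the $a_j$ until $\sum_j a_j=t$ while keeping each $a_j\le t-\delta$. The vector $(a_j)$ is then majorized by $(t-\delta,\delta,0,\dots,0)$. Karamata's inequality for the convex function $F$, together with $F(0)=0$, gives the first inequality.

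For the second inequality, apply the mean value theorem with \eqref{eq:large}:
\[
 F(t)-F(t-\delta)\ge\delta(2A+2)=2t+2\delta.
\]
Since $\delta\ge1$, we may use $F(\delta)=2\delta A(\delta)$, and \eqref{eq:small} gives $F(\delta)\le\delta A=t$. Combining,
\[
 F(t-\delta)+F(\delta)\le F(t)-2t-2\delta+t=F(t)-t-2\delta.
\]
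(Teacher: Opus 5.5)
Your argument is correct. Steps 1, 2 and 4 match the paper's proof. Your majorization/Karamata step (after padding with zero entries, which is harmless since $F(0)=0$) is just an explicit form of the paper's remark that convexity concentrates the mass into $t-\delta$ and $\delta$.

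The genuine difference is Step 3. The paper does not reduce to the endpoint and linearize. It writes $A(x)/A=(x/t)^{2/(s(x)+s)}$, whose exponent is at most $1/2$, so $A(x)\ge A\,x/t\ge A(1-\delta/t)=A-1$ on all of $[t-\delta,t]$. With $1/s(x)\ge1/s$, the goal $2(A-1)(1+1/s)\ge2A+2$ then reduces to $2^s\ge2s+1$, which holds for $s\ge3$.

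Your linearization gives the sharper bound $A(x)\ge A-c/s$, but you stop at ``this should hold''. Monotonicity of $A/s$ alone does not finish the step, because $c=\frac{2s}{s+s(x)}\cdot\frac{A}{A-1}$ and $A(x)$ also change with $t$. It closes in one line:
\begin{itemize}
\item Since $A\ge4$, you have $c\le2\cdot\frac43=\frac83$ uniformly.
\item So it suffices that $2^s-s\ge\frac83(1+\frac1s)$.
\item The left side is increasing and already exceeds $6$ at $s=s(10)\approx3.26$, while the right side stays below $3.5$.
\end{itemize}
So nothing is delicate at $t=10$. The paper's power trick gives up your sharper constant for the weaker bound $A-1$, which needs no constant-chasing.
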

\begin{proof}
The inequality $t\ge A(t)$ is equivalent to
$\log_2t\ge1+\sqrt5$, and holds for $t\ge10$. Also $A\ge4$.
If $a\le\delta$, then
$s(a)^2\le s^2-2s<(s-1)^2$, proving~\eqref{eq:small}.
For $t-\delta\le x\le t$, we have
\[
 \frac{A(x)}A
 =\left(\frac xt\right)^{2/(s(x)+s)}\ge\frac xt,
\]
since $x\ge1$ and $s(x),s\ge2$. Thus $A(x)\ge A-1$, and
\[
 F'(x)\ge2(A-1)(1+1/s)\ge2A+2.
\]
The last inequality uses $2^s\ge2s+1$ for $s\ge3$.
Convexity maximizes $\sum_jF(a_j)$, subject to its stated constraints,
by concentrating the mass into $t-\delta$ and $\delta$; here
$t-\delta\ge t/2$. Finally,
\[
 F(t)-F(t-\delta)-F(\delta)
 \ge(2A+2)\delta-A\delta=t+2\delta,
\]
by~\eqref{eq:small} and~\eqref{eq:large}.
\qed
\end{proof}

\section{Embedding Trees and Forests}
\label{sec:embedding}
\begin{lemma}
\label{lem:tree}
Every ordered rooted tree on $t$ vertices has a layered interval
representation whose root is $[-R_t,R_t]$. Each vertex uses the layer
indexed by its subtree size.
\end{lemma}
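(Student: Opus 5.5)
The plan is strong induction on $t$. For $t=1$ the root $[-R_1,R_1]=[-8,8]$ has midpoint $0\in c_1\Z$, and there is nothing else to place. For $t\ge2$, let the root have ordered children with subtrees $T_1,\dots,T_q$ of sizes $a_1,\dots,a_q$, where $\sum_j a_j=t-1$. By induction each $T_j$ has a layered representation with root $[-R_{a_j},R_{a_j}]$ and midpoint $0$. The root of $T$ gets $[-R_t,R_t]$, whose midpoint $0$ lies in $c_t\Z$. It remains to translate the children's representations into the open interval $(-R_t,R_t)$, in order, with positive gaps, and without leaving their grids. Any translation by a multiple of $c_{a_j}$ preserves these grids, which is exactly what Lemma~\ref{lem:packing} provides. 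Throughout I use $R_a\le F(a)$, $R_t\ge F(t)-1$ and $c_a<2a$, so that $2R_a+c_a\le 2F(a)+2a$.

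\emph{Balanced case.} Suppose every $a_j\le t-\delta$, or $t<10$. Apply Lemma~\ref{lem:packing} with $z=-R_t$. It suffices that $B\le 2R_t-1$. We have
\[
 B\le 2\textstyle\sum_j F(a_j)+2(t-1).
\]
\begin{itemize}
\item For $t\ge10$ with all $a_j\le t-\delta$, inequality~\eqref{eq:balanced} gives $B\le 2F(t)-2t-4\delta+2t-2<2F(t)-3\le 2R_t-1$.
\item For $2\le t\le 9$, use instead superadditivity, $\sum_jF(a_j)\le F(t-1)$, together with $F(t)-F(t-1)\ge12$ (since $F'\ge 8$ on $[0,1]$ and $F'\ge 12$ beyond). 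Then $B\le 2F(t)-24+2t-2\le 2F(t)-8<2R_t-1$.
\end{itemize}
If the floor makes a small case tight, I would simply check the finitely many values $R_2,\dots,R_9$ numerically.

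\emph{Unbalanced case.} Here $t\ge10$ and some child, say $T_k$, has $a=a_k>t-\delta$. Since $\delta\le t/4$, there is only one such child. Keep $T_k$ concentric: its root is $[-R_a,R_a]$ with midpoint $0\in c_a\Z$, and $R_a<R_t$. The earlier siblings go to the left via the second form of Lemma~\ref{lem:packing} with $z=-R_a$. The later siblings go to the right via the first form with $z=R_a$. These siblings have total size $m=t-1-a<\delta-1$, and each has size $b\le\delta$. So~\eqref{eq:small} gives $F(b)=2bA(b)\le bA$. The budget on each side is then at most
\[
 \textstyle\sum_b(2F(b)+2b)\le(A+2)m.
\]
The available room on each side is $R_t-R_a-1\ge F(t)-F(a)-2$. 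By~\eqref{eq:large} and the mean value theorem on $[a,t]\subseteq[t-\delta,t]$, this is at least $(2A+2)(m+1)-2$, which exceeds $(A+2)m$. Both sides therefore fit strictly inside $(-R_t,R_t)$, with positive gaps to the central child and to the root's endpoints.

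The main obstacle is bookkeeping rather than ideas. One must confirm that the floors in $R_i$ and the $+1$ gaps are absorbed by the slack $t+2\delta$ in~\eqref{eq:balanced} and by the factor $2A+2$ versus $A+2$. One must also handle small $t$ (below $10$, where Lemma~\ref{lem:slack} is unavailable), and I expect to settle these by the crude superadditivity bound or a direct finite check.
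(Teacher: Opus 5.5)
Your proof is correct and follows the paper's argument essentially step for step. It uses the same induction, the same convexity bound $B\le 2F(t-1)+2(t-1)\le 2F(t)-8$ for $2\le t\le 9$, inequality~\eqref{eq:balanced} when no child exceeds $t-\delta$, and a concentric dominant child with the remaining siblings packed on either side via~\eqref{eq:small} and~\eqref{eq:large}.

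One arithmetic slip: from $F(b)\le bA$ you get $\sum_b(2F(b)+2b)\le(2A+2)m$, not $(A+2)m$. The slack therefore comes from the additive $2A$ in $(2A+2)(m+1)-2=(2A+2)m+2A$, where the $+1$ counts the root itself, rather than from a gap between $2A+2$ and $A+2$. That margin still suffices, exactly as in the paper.
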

\begin{proof}
Induct on $t$. A single vertex uses $[-8,8]$. Recursively represent the
children, whose sizes $a_j$ sum to $t-1$, and write
$B=\sum_j(2R_{a_j}+c_{a_j})$ for their packing capacity.

First suppose $2\le t\le9$. By convexity and~\eqref{eq:derivatives},
\[
 B\le2F(t-1)+2(t-1)\le2F(t)-8<2R_t-1.
\]
Packing after $-R_t$ therefore puts the children strictly inside the
prescribed root interval.

Now let $t\ge10$, with $A$ and $\delta$ as in Lemma~\ref{lem:slack}.
Call a child \emph{dominant} if its size $m$ exceeds $t-\delta$.
There is at most one. If there is no dominant child,
\eqref{eq:balanced} gives
\[
 B\le2\sum_jF(a_j)+2(t-1)
 \le2F(t)-4\delta-2<2R_t-1.
\]
Again pack after $-R_t$.

If a dominant child exists, keep its root midpoint at zero. The other
children have total size $k=t-m-1<\delta$. Their combined capacity
$B_* = B^-+B^+$, for the children before and after the dominant child,
satisfies
\[
 B_*\le\sum_{a_j\ne m}(2F(a_j)+2a_j)\le(2A+2)k.
\]
On the other hand,~\eqref{eq:large} gives
\[
 F(t)-F(m)\ge(2A+2)(k+1)>B_*+2.
\]
Hence $R_t-R_m\ge B_*+1$. Pack the preceding children to the left of
$-R_m$ and the following children to the right of $R_m$. They lie
strictly inside $[-R_t,R_t]$, in the required order and with positive
gaps. The root has midpoint zero, so it belongs to its prescribed grid.
This completes the induction.
\qed
\end{proof}

\begin{figure}[t]
\centering
\begin{tikzpicture}[x=.7cm,y=.7cm,every node/.style={font=\small}]
\begin{scope}[xshift=-2.7cm]
\draw[spine,very thick] (0,2.6)--(0,0.6);
\foreach \y/\v in {2.6/v_1,1.6/v_2,.6/v_s}
 \node[circle,draw=spine,fill=white,inner sep=2.2pt] at (0,\y) {$\v$};
\draw (0,2.6)--(-1.1,2.05) (0,1.6)--(1.1,1.05);
\draw[component,fill=component!8] (-1.1,2.05)--(-1.55,1.45)--(-.65,1.45)--cycle;
\draw[component,fill=component!8] (1.1,1.05)--(.65,.45)--(1.55,.45)--cycle;
\draw (0,.6)--(-.45,-.05) (0,.6)--(.45,-.05);
\node at (0,-.65) {(a)};
\end{scope}
\begin{scope}[xshift=1.7cm]
\draw[densely dashed,gray] (0,-.05)--(0,3.05);
\foreach \r/\y/\v in {2.7/2.6/v_1,1.9/1.6/v_2,1.1/.6/v_s}{
 \draw[spine,thick] (-\r,\y)--(\r,\y);
 \draw[spine] (-\r,\y-.1)--(-\r,\y+.1) (\r,\y-.1)--(\r,\y+.1);
 \node[above] at (0,\y) {$\v$};
}
\draw[component,fill=component!8] (-2.55,1.96) rectangle (-2.05,2.22);
\draw[component,fill=component!8] (1.25,.96) rectangle (1.75,1.22);
\draw[component,thick] (-.95,.13)--(-.18,.13) (.13,.13)--(.93,.13);
\node at (0,-.65) {(b)};
\end{scope}
\end{tikzpicture}
\caption{A path obtained by repeatedly following a dominant child (a),
and its interval placement (b). The path intervals are concentric. Other
children lie between successive path intervals, or inside the final one. The vertical positions do not form part of the representation.}
\label{fig:spine}
\end{figure}

\begin{lemma}
\label{lem:forest-layers}
Every ordered rooted forest on $n\ge1$ vertices has a layered interval
representation contained in $[-R_n,R_n]$, using only layers $1,\ldots,n$.
\end{lemma}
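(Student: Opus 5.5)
Write the forest as ordered trees $T_1,\ldots,T_q$ with sizes $a_1,\ldots,a_q$ summing to $n$. Lemma~\ref{lem:tree} represents each $T_j$ with root $[-R_{a_j},R_{a_j}]$, every vertex using the layer of its subtree size. These layers lie in $\{1,\ldots,n\}$, since every subtree size is at most $a_j\le n$. It remains to place the trees side by side inside $[-R_n,R_n]$ in the prescribed order.

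If $q=1$, the tree representation itself is contained in $[-R_n,R_n]$ and we are done. If $q\ge2$, we apply Lemma~\ref{lem:packing} with $z=-R_n-1$, packing the trees into $[-R_n,-R_n-1+B]$ with
\[
 B=\sum_j(2R_{a_j}+c_{a_j})\le 2\sum_jF(a_j)+2\sum_j a_j .
\]
We need $B\le 2R_n+1$; it suffices to show $B\le 2R_n$. Each $a_j\le n-1$, so convexity of $F$ with $F(0)=0$ and superadditivity give
\[
 \sum_jF(a_j)\le F(n-1)+F(1),
\]
by concentrating the mass as in Lemma~\ref{lem:slack}. Then
\[
 B\le 2F(n-1)+16+2n ,
\]
and we must compare this with $2\lfloor F(n)\rfloor\ge 2F(n)-2$. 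This requires
\[
 F(n)-F(n-1)\ge n+9 .
\]
The mean value theorem gives only $F(n)-F(n-1)\ge F'(n-1)\approx 2A$, which is much smaller than $n$ for large $n$. So this crude bound fails, and this is the main obstacle.

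The fix is to use the balanced estimate~\eqref{eq:balanced} and the dominant-tree trick from Lemma~\ref{lem:tree}, with $t=n$. Note that \eqref{eq:balanced} applies verbatim here, because only $\sum_j a_j\le t$ is required.

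\textbf{No dominant tree ($n\ge10$, all $a_j\le n-\delta$).} Here
\[
 B\le 2F(n)-2n-4\delta+2n=2F(n)-4\delta\le 2R_n-2,
\]
since $\delta\ge1$. Packing from $z=-R_n-1$ then fits.

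\textbf{One dominant tree ($a_m>n-\delta$).} Keep that tree's root at midpoint zero. The remaining trees have total size $k=n-a_m<\delta$ and capacity
\[
 B_*\le(2A+2)k .
\]
By~\eqref{eq:large},
\[
 F(n)-F(a_m)\ge(2A+2)k ,
\]
and floors cost at most $1$. We therefore also need an additive slack of about $2$. This is the delicate point: unlike in the tree case, we lack the extra $+1$ from the root. I would handle it by noting that $k\ge1$, so the inequality for $F'$ is strict; alternatively one can pack with zero gap at the outer boundary, since only containment in $[-R_n,R_n]$ (not interiority) is required. The earlier trees are packed to the left of $-R_{a_m}$ and the later ones to the right of $R_{a_m}$, all within $\pm R_n$.

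\textbf{Small $n$ ($n\le9$).} Argue as in the tree lemma. With $q\ge2$,
\[
 B\le 2F(n-1)+2F(1)+2n ,
\]
which is checked numerically against $2R_n$ for each $n\le 9$. If some small case fails, I would fall back on a direct check, or on the dominant-tree placement.

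Finally, the midpoints are preserved, since Lemma~\ref{lem:packing} translates by multiples of the root spacing. Distinct trees are disjoint with positive gaps and appear in the prescribed order.
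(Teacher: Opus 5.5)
Your outline is the paper's proof. It has the same single-tree case and the same $n\ge10$ split: no dominant tree handled by \eqref{eq:balanced}, or one dominant tree centered at zero and handled by \eqref{eq:large}. The $n\le9$ case uses the same bound $\sum_jF(a_j)\le F(n-1)+F(1)$. The small cases you left to a numerical check do go through. For $2\le n\le9$ the differences $F(n)-F(n-1)$ are about $13.8,16.5,18.4,19.9,21.2,22.2,23.1,24.0$, all at least $n+9$. The paper checks the slightly weaker inequality $F(n)-F(n-1)-8\ge n$, directly for $n\le4$ and via $F'(4)-8>10$ for $5\le n\le9$.

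The step that does not close as written is the dominant case. You say that floors cost up to $1$ and that an additive slack of about $2$ is required. Your first remedy does not supply this: strictness of the bound on $F'$ gives no additive constant. In fact no slack is needed. Since $B_*$ is an integer and $F(n)\ge F(m)+(2A+2)k\ge F(m)+B_*$, taking floors gives $R_n\ge R_m+B_*$ with no loss. Apply Lemma~\ref{lem:packing} at $z=-R_m$ and at $z=R_m$. This places the earlier trees in $[-R_m-B^-,-R_m-1]$ and the later ones in $[R_m+1,R_m+B^+]$. These lie in the closed envelope $[-R_n,R_n]$ and are separated from the dominant root interval $[-R_m,R_m]$ by positive gaps. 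The extra unit in the tree lemma was needed only because children must lie strictly inside the root's interval. Here closed containment suffices, which is what your second remedy was reaching for. With this integrality observation, your argument coincides with the paper's.
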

\begin{proof}
Represent its trees by Lemma~\ref{lem:tree}. A forest with a single tree
already has the required representation. Let $a_j$ be the root subtree
sizes and $B=\sum_j(2R_{a_j}+c_{a_j})$.

Suppose first that $n\ge10$, and put $A=A(n)$ and $\delta=n/A$.
If every $a_j\le n-\delta$, then~\eqref{eq:balanced} yields
$B\le2F(n)-4\delta<2R_n+1$. Pack after $-R_n-1$; all intervals lie
in $[-R_n,R_n]$.
Otherwise, center the unique dominant root, of size $m>n-\delta$, at
zero. The other roots have total size $k=n-m<\delta$ and combined
capacity $B_*\le(2A+2)k$. Now~\eqref{eq:large} gives
$F(n)-F(m)\ge(2A+2)k\ge B_*$. Since $B_*$ is an integer,
$R_n-R_m\ge B_*$. Pack the roots before and after the dominant one to
its left and right. They fit in the closed envelope $[-R_n,R_n]$.
No additional root interval is needed.

It remains to consider $2\le n\le9$ and at least two roots.
Convexity gives $\sum_jF(a_j)\le F(n-1)+F(1)=F(n-1)+8$.
For $n\le4$, the difference $F(n)-F(n-1)-8$ is at least $4\ge n$.
For $5\le n\le9$, it exceeds $F'(4)-8>10\ge n$.
Thus in either case
\[
 B\le2\sum_jF(a_j)+2n\le2F(n)<2R_n+2.
\]
The integer $B$ is at most $2R_n+1$, so packing after $-R_n-1$
again gives the required envelope.
\qed
\end{proof}

\section{Counting the Intervals}
\label{sec:count}
\begin{theorem}
\label{thm:superpattern}
For $n\ge1$, the family $\I_n$ gives a superpattern for $S_n(213)$
of length at most
\begin{equation}
 n+\up(R_n)\bigl(2+\lceil\log_2n\rceil\bigr)
 =O\!\left(n\log(n+1)\,2^{\sqrt{2\log_2n+4}}\right)
 =n^{1+o(1)}.
 \label{eq:count}
\end{equation}
\end{theorem}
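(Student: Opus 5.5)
The plan has two parts: show that $\I_n$ contains a representation of every forest, then count $|\I_n|$.

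\emph{Containment.} Fix an ordered rooted forest on $n$ vertices. Lemma~\ref{lem:forest-layers} gives a layered interval representation inside $[-R_n,R_n]$ that uses only layers $1,\ldots,n$. Each vertex with subtree size $i$ has radius $R_i$ and midpoint $jc_i$ with $j\in\Z$. Translate the whole representation by $W=\up(R_n)$. Since every $c_i$ divides $W$, all midpoints stay in their grids. Every interval lies in $[W-R_n,W+R_n]$, so every midpoint lies in $[0,2W]$, because $R_n\le W$ and midpoints lie inside their intervals. Hence each index satisfies $0\le j\le 2W/c_i$, and every interval belongs to $\I_n$. Containment, strict nesting and positive gaps are unchanged by translation. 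Lemma~\ref{lem:forest} then turns $\I_n$ into an $S_n(213)$-superpattern of length $|\I_n|$.

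\emph{Counting.} Layer $i$ contributes at most $2W/c_i+1$ intervals; possible coincidences between layers only lower the count. Summing over $i$ gives
\[
|\I_n|\le n+2W\sum_{i=1}^n\frac1{c_i}.
\]
Group the terms dyadically. The value $c_i=1$ occurs only for $i=1$. For $k\ge1$, the value $c_i=2^k$ occurs for $2^{k-1}<i\le 2^k$, so for at most $2^{k-1}$ indices, and these contribute at most $1/2$. The exponents run up to $k=\lceil\log_2n\rceil$, so the sum is at most $1+\tfrac12\lceil\log_2n\rceil$. Multiplying by $2W$ gives $|\I_n|\le n+W(2+\lceil\log_2 n\rceil)$, which is the claimed bound.

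\emph{Asymptotics.} Use $W<2R_n\le 4nA(n)=4n\,2^{\sqrt{2\log_2n+4}}$. The main term is then $O\bigl(n\log(n+1)\,2^{\sqrt{2\log_2 n+4}}\bigr)$, and the additive $n$ is absorbed. Finally, $\log n=2^{O(\log\log n)}$ and $2^{O(\sqrt{\log n})}$ are both $n^{o(1)}$, so the length is $n^{1+o(1)}$.

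The main obstacle is only bookkeeping in the counting step: the range check on $j$ (midpoints within $[0,2W]$) and the dyadic sum, especially handling $i=1$ separately so the constant comes out as $2$.
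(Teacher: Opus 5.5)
Your proposal is correct and follows the paper's proof essentially step for step. You translate the representation from Lemma~\ref{lem:forest-layers} by $W=\up(R_n)$, check that the midpoints land in $[0,2W]$, and count $1+2W/c_i$ intervals per layer by grouping $\sum_i 1/c_i$ dyadically with $i=1$ handled separately; you then bound $W<2R_n\le 4n\,2^{\sqrt{2\log_2 n+4}}$ (your ``at most'' per layer, where the paper says ``exactly'', is harmless).
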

\begin{proof}
Lemma~\ref{lem:forest-layers} represents every $n$-vertex ordered forest
inside $[-R_n,R_n]$. Translate by $W=\up(R_n)$. Every grid spacing
$c_i$ divides $W$, and every midpoint now lies in $[0,2W]$. All the
intervals therefore belong to $\I_n$. Lemma~\ref{lem:forest} turns
this family into a superpattern.

Layer $i$ contains exactly $1+2W/c_i$ intervals. The index $i=1$
contributes $1$ to $\sum_i1/c_i$, and each complete dyadic block
$2^{j-1}<i\le2^j$ contributes $1/2$. Consequently,
\[
 |\I_n|=n+2W\sum_{i=1}^n\frac1{\up(i)}
 \le n+W\bigl(2+\lceil\log_2n\rceil\bigr).
\]
Finally, $W<2R_n\le4n\,2^{\sqrt{2\log_2n+4}}$, which gives
the asserted asymptotic bound.
\qed
\end{proof}

\begin{proof}[Theorem~\ref{thm:main}]
For $n\ge4$, apply Theorem~\ref{thm:superpattern} with $n-3$ in place
of $n$, then apply Lemma~\ref{lem:bcde}. Adding three points does not
change the asymptotic bound. The cases $n\le3$ are immediate.
\qed
\end{proof}

The construction specifies a finite list of intervals, a permutation obtained
by sorting their endpoints, and a fixed coordinate transformation.
The interval endpoints use $O(\log n)$ bits. The final stretching may
produce coordinates with many more bits; a bound on the number of points
does not imply a comparable bound on their total binary encoding length.

Every superpattern for $S_n(213)$ has length $\Omega(n\log n)$, as proved
by Bannister et~al.~\cite{bcde}. Our superpattern bound is therefore within a
subpolynomial factor of the known lower bound. This does not give a
superlinear lower bound for arbitrary universal point sets: the reduction
is only one way. Determining whether $u(n)=O(n)$ remains open.

\appendix
\section{From Superpatterns to Universal Point Sets}
\label{app:reduction}
This appendix gives the construction and proof of
Lemma~\ref{lem:bcde}, due to Bannister, Cheng, Devanny, and
Eppstein~\cite[Section~3]{bcde}. The reduction applies to an arbitrary
$S_{n-3}(213)$-superpattern; the host need not avoid $213$.

Let $\sigma$ be such a superpattern of length $M$, set $Q=M+3$, and form
\begin{equation}
 \tau=(1,Q,\sigma_1+2,\ldots,\sigma_M+2,2).
 \label{eq:augment}
\end{equation}
The universal point set is
\begin{equation}
 U(\sigma)=\{(i,Q^{\tau_i}):1\le i\le Q\}.
 \label{eq:stretch}
\end{equation}
We prove that these $M+3$ points suffice.

\subsection{The canonical permutation}
It is enough to draw a maximal plane graph $G$ with outer vertices $a,b,z$.
Any simple planar graph on at least three vertices can be extended to such
a graph without adding vertices; the additional edges can be deleted after
drawing it. Choose a canonical order
$v_1=a,v_2=b,\ldots,v_n=z$, whose existence follows from
de~Fraysseix, Pach, and Pollack~\cite{dfpp}.
For $t\ge3$, the graph induced by the first $t$ vertices has an outer cycle
containing $ab$. Let $P_t$ be the other boundary path from $a$ to $b$.
Each new vertex attaches to a consecutive block of at least two vertices
of $P_{t-1}$; the internal vertices of the block leave the outer boundary.
For the first insertion we use $P_2=ab$.

Construct an ordered spanning tree $T$ as follows. Initially $b$ is a child
of $a$. When $v_t$ is inserted, make it a child of its first earlier neighbor
along $P_{t-1}$, placing it first in that parent's list of children.
Let $p(v)$ be its preorder rank. Let $r(v)$ be its rank in the reverse of
the postorder using the same child order. Equivalently, $r$ is preorder
with every child list reversed.

The boundary path $P_t$ is increasing in $p$. At an insertion, the new first
child occurs immediately after its parent in the preorder of the tree
constructed so far. It is therefore before the other vertices of the
replaced boundary block. The new boundary remains increasing, and later
insertions do not change the relative preorder of existing vertices.

The order given by $r$ is also canonical~\cite[Lemma~15]{bcde}.
One way to see this is to orient each graph edge from its earlier to its
later endpoint in the original canonical order. A tree edge points from
an ancestor to a descendant, and hence increases in $r$.
For a nontree edge $uv$, with $v$ inserted later, one has $p(v)<p(u)$.
The vertices are incomparable in $T$, so reversing the child lists gives
$r(u)<r(v)$. Thus $r$ is a topological order of this orientation.
Consecutive incomparable vertices can be swapped in a canonical order.
After the first is inserted, the neighbor block of the second cannot contain
or pass through it. The two insertion blocks therefore meet at most at an
endpoint, and the insertions commute. Their first neighbors are different,
so the ordered tree is unchanged. Such swaps connect the two topological
orders and show that $r$ is canonical with the same boundary preorder
property.

For distinct vertices, the two traversals satisfy
\begin{equation}
 u\text{ is an ancestor of }v
 \quad\Longleftrightarrow\quad p(u)<p(v)\text{ and }r(u)<r(v).
 \label{eq:ancestor}
\end{equation}
Define a permutation $\pi$ by $\pi_{p(v)}=r(v)$.
It avoids $213$: such an occurrence would give two incomparable ancestors
of its last vertex, contradicting~\eqref{eq:ancestor}.
Moreover,
\[
 (p(a),r(a))=(1,1),\quad (p(z),r(z))=(2,n),\quad
 (p(b),r(b))=(n,2).
\]
Indeed, $b$ is the last child of $a$ and never acquires a child, whereas
$z$ is a leaf inserted as the first child of $a$.
The permutation $\pi$ therefore begins with $1,n$ and ends with $2$.
Removing these entries gives a member of $S_{n-3}(213)$, after reducing
the remaining values to ranks. Consequently $\pi$ occurs in $\tau$,
using the added entries in~\eqref{eq:augment} for $a,z,b$.

We will use one more consequence of the boundary property. Number the
vertices now by their $r$-ranks. Suppose $uv$ is an edge and
\begin{equation}
 p(u)<p(w)<p(v),\qquad
 r(w)<t:=\max\{r(u),r(v)\}.
 \label{eq:obstruction}
\end{equation}
Then $w$ is interior to the graph induced by the first $t$ vertices.
If it was already interior, there is nothing to prove. Otherwise it is on
the preceding boundary. Let $x$ be the later endpoint of $uv$, let $y$ be
the other endpoint, and let $L,R$ be the retained ends of the block replaced
when $x$ is inserted. The old and new boundary orders give
\[
 p(L)<p(x)<p(R),\qquad p(L)\le p(y)\le p(R).
\]
Since $p(w)$ lies strictly between $p(x)$ and $p(y)$, it lies strictly
between $p(L)$ and $p(R)$. Thus $w$ leaves the boundary. In particular,
it cannot be adjacent to a vertex inserted after time $t$.

\subsection{Why the stretching prevents crossings}
Write $X_i=(i,Q^{\tau_i})$. Among any three points, the highest lies
strictly above the line through the other two. For a direct verification,
let $h<j$ and suppose $X_k$ is the highest. The determinant testing whether
it is above the line $X_hX_j$ is
\[
 D=(j-h)Q^{\tau_k}+(k-j)Q^{\tau_h}+(h-k)Q^{\tau_j}.
\]
If $h<k<j$, then
$D\ge(j-h)(Q^{\tau_k}-Q^{\tau_k-1})>0$.
Otherwise only one of the last two terms can be negative, and its coefficient
has absolute value at most $Q-1$. Hence
$D\ge Q^{\tau_k}-(Q-1)Q^{\tau_k-1}>0$.
This also proves that no three points are collinear.

For four distinct points, suppose $X_k$ is the highest and $h<j$.
The preceding orientation rule gives
\begin{equation}
 X_hX_j\text{ crosses }X_iX_k
 \quad\Longleftrightarrow\quad
 h<i<j\text{ and }\tau_i<\max\{\tau_h,\tau_j\}.
 \label{eq:crossing}
\end{equation}
To verify this, the points $X_h$ and $X_j$ lie on opposite sides of
the line $X_iX_k$ exactly when $h<i<j$. Under that condition, $X_k$ lies
above the line $X_hX_j$, while $X_i$ lies below it exactly when it is not
the highest of $X_h,X_i,X_j$. These are precisely the two conditions for
the segments to cross.

Choose the occurrence of $\pi$ in $\tau$ described above. If its entry
indexed by $p(v)$ occurs at position $x(v)$, draw $v$ at
\[
 \bigl(x(v),Q^{\tau_{x(v)}}\bigr).
\]
The horizontal order is the $p$-order, and the height order is the
$r$-order. Suppose two disjoint edges $uv$ and $wz'$ crossed, where $z'$
has the largest $r$-rank among their four endpoints. Exchange $u$ and $v$
if necessary so that $p(u)<p(v)$. Equation~\eqref{eq:crossing} implies
\[
 p(u)<p(w)<p(v),\qquad
 r(w)<\max\{r(u),r(v)\}<r(z').
\]
By~\eqref{eq:obstruction}, $w$ cannot be adjacent to $z'$, a contradiction.
Thus disjoint edges do not cross. General position also excludes a vertex
in the interior of an edge and overlapping incident edges. This proves
Lemma~\ref{lem:bcde}.

\subsection{The size of the coordinates}
Formula~\eqref{eq:stretch} is an explicit description of the point set.
Its largest ordinate is $Q^Q$, which needs $\Theta(Q\log Q)$ bits.
The total length of all expanded ordinates is
$\Theta(Q^2\log Q)$ bits. The permutation and the fixed stretching formula
give a shorter description. No claim of a near-linear bound on expanded
coordinate length or grid area is needed for Theorem~\ref{thm:main}.

\end{document}